\newcommand{\ket}[1]{|#1\rangle}
\newcommand{\mc}[1]{\mathcal{#1}}
\newcommand{\F}{\mathbb{F}}
\newcommand{\nix}[1]{}
\newcommand{\supp}{{\rm supp}}
\newcommand{\rk}[1]{{{\rm rk}\left( #1\right)}}
\newcommand{\be}{\begin{eqnarray*}}
\newcommand{\ee}{\end{eqnarray*}}
\newcommand{\ben}{\begin{eqnarray}}
\newcommand{\een}{\end{eqnarray}}
\newcommand{\ba}{\begin{array}}
\newcommand{\ea}{\end{array}}
\newcommand{\bmt}{\left[\begin{array}}
\newcommand{\emt}{\end{array}\right]}
\newtheorem{theorem}{Theorem}
\newtheorem{lemma}[theorem]{Lemma}
\newtheorem{corollary}[theorem]{Corollary}
\newtheorem{fact}{Fact}
\begin{document}

% Use the \preprint command to place your local institutional report
% number in the upper righthand corner of the title page in preprint mode.
% Multiple \preprint commands are allowed.
% Use the 'preprintnumbers' class option to override journal defaults
% to display numbers if necessary
%\preprint{}

%Title of paper
\title{Matroids and Quantum Secret Sharing Schemes}

% repeat the \author .. \affiliation  etc. as needed
% \email, \thanks, \homepage, \altaffiliation all apply to the current
% author. Explanatory text should go in the []'s, actual e-mail
% address or url should go in the {}'s for \email and \homepage.
% Please use the appropriate macro foreach each type of information

% \affiliation command applies to all authors since the last
% \affiliation command. The \affiliation command should follow the
% other information
% \affiliation can be followed by \email, \homepage, \thanks as well.
\author{Pradeep Sarvepalli}
\email[]{pradeep@phas.ubc.ca}
\affiliation{Department of Physics and Astronomy,
University of British Columbia, Vancouver V6T 1Z1, Canada }
\author{Robert Raussendorf}
%\email[]{raussen@phas.ubc.ca}
\affiliation{Department of Physics and Astronomy,
University of British Columbia, Vancouver V6T 1Z1, Canada }

\date{October 23, 2009}
%\date{\today}

\begin{abstract}
A secret sharing scheme is a cryptographic protocol to distribute a secret state in an encoded form among a group of players 
such that only authorized subsets of the players can reconstruct the secret.  Classically, efficient secret sharing schemes have been 
shown to be induced by matroids. Furthermore,  access structures of such schemes can be characterized by an excluded minor relation. No such relations are known for quantum secret sharing schemes. In this paper we take the first steps toward a matroidal characterization of quantum secret sharing schemes.  In addition to providing a new perspective on quantum secret sharing schemes, this characterization has important benefits.  While previous work has shown how to construct quantum secret sharing schemes for general access structures, these schemes are not claimed to be efficient. In this context the present results prove to be useful; they enable us to construct efficient quantum secret sharing schemes for many general access structures. More precisely, we show that an identically self-dual matroid that is representable over a finite field induces a pure state quantum secret sharing scheme with information rate one. 
\end{abstract}

% insert suggested PACS numbers in braces on next line
\pacs{	03.67.Dd; 03.67.Pp}
% insert suggested keywords - APS authors don't need to do this
\keywords{quantum secret sharing, matroids, self-dual matroids, quantum codes, quantum cryptography}

%\maketitle must follow title, authors, abstract, \pacs, and \keywords
\maketitle

% body of paper here - Use proper section commands
% References should be done using the \cite, \ref, and \label commands
\section{Introduction}\label{sec:intro}
Secret sharing is an important cryptographic primitive originally motivated by the need to distribute secure information among parties some of whom are untrustworthy \cite{shamir79,blakley79}.  Additionally, it finds   applications in  secure multi-party distributed computation \cite{cramer08,benor06}. Secret sharing schemes have a rich mathematical structure \cite{martin91} and they have been shown to be closely associated to error correcting codes \cite{massey93,cramer08,simonis98} and matroids \cite{brickell91,golic98,beimel08,seymour92, stinson92,cramer08}. The interplay with these objects has enabled  us to obtain new insights not only about secret sharing schemes but codes and matroids as well. 
Although relatively new, the field of quantum secret sharing \cite{hillery99} has made rapid progress both theoretically 
\cite{cleve99,karlsson99,gottesman00,imai04,xiao04,rietjens05,markham08} and experimentally \cite{tittel01, gaertner02,lance04,lance04b}. 
However, its connections with other mathematical disciplines have not been as well studied. 
In particular, no connections have been made with the theory of matroids, which is in sharp contrast to the classical scenario. 
These connections are of more than theoretical interest. Classically, optimal secret sharing schemes i.e. those with 
information rate one, are induced by matroids. Additionally, matroids provide alternate methods to prove bounds on the rates that can be achieved for certain access structures.  
For all these reasons it is useful to develop the theory of matroids and quantum secret sharing schemes.

In this paper it is our goal to bring into bearing the theory of matroids to characterize quantum secret sharing schemes. While our results are only the first steps toward this characterization, they do indicate the usefulness of such associations.
This paper is organized as follows.  We begin with a brief review of the necessary background in 
secret sharing. In Section~\ref{sec:mat}
we review some of the known results on classical secret sharing schemes and matroids; 
these results are not well known in the quantum information community and also provide the backdrop for generalizing the 
connections between matroids and secret sharing schemes. 
In Section~\ref{sec:main} we prove the central result of this paper, namely how representable identically self-dual matroids lead to
efficient quantum secret sharing schemes.
We assume that the reader is familiar with the basic results on quantum computing and stabilizer codes.

\subsection{Classical secret sharing}
A secret sharing scheme is a protocol to distribute a secret $s$ among a set of players $P$, by a dealer
$D$, such that only authorized subsets of $P$ can reconstruct the secret. 
Subsets of $P$ which cannot reconstruct the secret are called unauthorized
sets.  The access structure $\Gamma$ consists of all subsets that can reconstruct the secret. The adversary structure $\mc{A}$ consists of all unauthorized  subsets.  Any access structure $\Gamma$  is required to satisfy the monotone property i.e. if 
$A \in \Gamma$, then any set $B\supseteq A$ is also in $\Gamma$. This is the only restriction on the access structures for classical secret sharing schemes. Any access structure satisfying the monotone property can be realized by an appropriate secret sharing scheme albeit with 
large complexity, see for instance \cite{stinson92}. A secret sharing scheme is said to be perfect if the unauthorized sets cannot extract any information about the 
secret. A precise information theoretic formulation can be given that quantifies this condition. 
We typically require the secret to be taken from a finite alphabet, $\mathbb{S}$. The shares distributed need not be
in the same domain as the secret; in fact each share can be in a domain of different alphabet. Let the domain of the $i$th party be $\mathbb{S}_i$. 
An important metric of performance for secret sharing schemes is the information rate $\rho$ which is defined as 
\ben
\rho = \min_{i} \frac{\dim \mathbb{S}}{\dim \mathbb{S}_i}. \label{eq:infoRate}
\een
Secret sharing schemes with $\rho=1$ are said to be ideal. The associated access structure is 
said to be ideal. More generally if an access structure can be realized with information rate one
for some secret sharing scheme, then it is said to be ideal. Note that we do not restrict the 
dimension of the secret in this case. An important problem of secret sharing
is to construct ideal secret sharing schemes for any given (monotone) access structure. 
Not every access structure can be realized with information rate of one. 

\subsection{Quantum secret sharing}
A quantum secret sharing scheme generalizes the classical one in two possible ways. 
We use quantum states to share either a secret quantum state  or a classical secret. 
Some authors refer to the first case as quantum state sharing, reserving the term ``quantum secret sharing" to situations 
where the secret is shared in an adversarial setting. Though this  might be preferable in some contexts, we will continue to use the traditional terminology. 
Quantum secret sharing schemes for classical secrets were introduced by Hillery et al in  \cite{hillery99}. They also proposed
%secret sharing 
schemes for sharing quantum secrets, however these are not perfect i.e.,  unauthorized sets can extract 
some information about the secret. Cleve et al \cite{cleve99} proposed the first perfect quantum secret sharing schemes for quantum secrets.
The theory of quantum secret sharing was developed further making important connections to quantum coding theory in \cite{cleve99,gottesman00} and quantum information theory in \cite{imai04,rietjens05} and more recently to graphs via labelled graph states in \cite{markham08}. 

In this paper we are concerned with the sharing of quantum secrets. Unlike classical secret sharing schemes a quantum secret sharing scheme cannot realize every monotone access structure. An additional constraint due to the ``no-cloning theorem'' \cite{wootters82,dieks82} has to be imposed on a realizable access structure.  Recall that the no-cloning theorem states that an arbitrary quantum state cannot be copied.
In any quantum secret sharing scheme we cannot have two disjoint authorized sets in the access structure as this would
violate the no-cloning theorem. This condition in conjunction with the monotonicity of access structure determines the allowed
access structures for all quantum secret sharing schemes \cite[Theorem~8]{gottesman00}. The same condition has been stated in different forms in the literature.  We record this result in its various forms for future use. First we need the notion of dual of a set. 
Let $P$ be a set, then we denote the powerset of $P$ as $2^P$.  For any subset  $A \subseteq 2^P$,  we define the dual of $A$ as 
\ben
A^\ast= \{ x\subset P \mid \overline{x} \not\in A \}.\label{eq:dual}
\een

\begin{lemma}[Self-orthogonal access structures]\label{lm:selfOrth}
Let $\Gamma$ be the access structure and $\mc{A}$  the adversary structure of a  quantum secret sharing scheme. 
Then the following statements are equivalent. 
\ben
{A\cap B \neq \emptyset \mbox{ for all } A, B\in\Gamma} \label{eq:noDisjointSets} \\
{\Gamma \subseteq \Gamma^\ast\label{eq:selfOrth}}\\
{\mc{A}^\ast \subseteq \mc{A}\label{eq:dualContaining}}
\een
\end{lemma}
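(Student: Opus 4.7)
The plan is to prove the three equivalences by unpacking the definition of the dual in \eqref{eq:dual}, together with two simple facts: the adversary structure is the complement of the access structure in $2^P$, i.e.\ $\mc{A}=2^P\setminus\Gamma$, and $\Gamma$ is monotone. Once these are in hand, each equivalence reduces to a short set-theoretic manipulation.

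I would first handle \eqref{eq:noDisjointSets} $\Leftrightarrow$ \eqref{eq:selfOrth}. The definition gives $\Gamma\subseteq\Gamma^\ast$ iff for every $A\in\Gamma$, $\overline{A}\notin\Gamma$. The implication \eqref{eq:noDisjointSets} $\Rightarrow$ \eqref{eq:selfOrth} is immediate, since if $\overline{A}\in\Gamma$ with $A\in\Gamma$ we would have two disjoint authorized sets, contradicting \eqref{eq:noDisjointSets}. For the converse, this is the step where I expect the real (and only) subtlety: one cannot proceed without invoking monotonicity. Suppose \eqref{eq:selfOrth} holds and $A,B\in\Gamma$ with $A\cap B=\emptyset$. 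Then $B\subseteq\overline{A}$, so monotonicity forces $\overline{A}\in\Gamma$, contradicting \eqref{eq:selfOrth}.

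Next I would do \eqref{eq:selfOrth} $\Leftrightarrow$ \eqref{eq:dualContaining} by direct translation. Using $\mc{A}=2^P\setminus\Gamma$ in the definition \eqref{eq:dual} gives
\begin{eqnarray*}
\Gamma^\ast &=& \{x\subseteq P \mid \overline{x}\in\mc{A}\}, \\
\mc{A}^\ast &=& \{x\subseteq P \mid \overline{x}\in\Gamma\}.
\end{eqnarray*}
Hence \eqref{eq:selfOrth} says that for every $x\in\Gamma$ one has $\overline{x}\in\mc{A}$, while \eqref{eq:dualContaining} says that for every $x$ with $\overline{x}\in\Gamma$ one has $x\in\mc{A}$; setting $y=\overline{x}$ in the second condition shows the two statements are identical.

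The only genuine content of the lemma is the use of monotonicity in the direction \eqref{eq:selfOrth} $\Rightarrow$ \eqref{eq:noDisjointSets}; everything else is a definitional rearrangement. So the main (and minor) obstacle is simply to be careful not to claim \eqref{eq:noDisjointSets} $\Leftrightarrow$ \eqref{eq:selfOrth} as a purely set-theoretic fact, and to flag where the monotone hypothesis enters.
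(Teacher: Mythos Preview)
Your proof is correct and, for \eqref{eq:noDisjointSets} $\Leftrightarrow$ \eqref{eq:selfOrth}, matches the paper's argument exactly, including the use of monotonicity precisely in the direction \eqref{eq:selfOrth} $\Rightarrow$ \eqref{eq:noDisjointSets}. For \eqref{eq:selfOrth} $\Leftrightarrow$ \eqref{eq:dualContaining} you take a slightly different (and cleaner) route: you rewrite $\Gamma^\ast$ and $\mc{A}^\ast$ via $\mc{A}=2^P\setminus\Gamma$ and observe the two inclusions become literally the same statement under the substitution $y=\overline{x}$, whereas the paper argues by set-algebraic decompositions, using $\Gamma\cup\mc{A}=\Gamma^\ast\cup\mc{A}^\ast=2^P$ and $\Gamma^\ast\cap\mc{A}^\ast=\emptyset$ to write $\mc{A}=(\Gamma^\ast\setminus\Gamma)\cup\mc{A}^\ast$ and read off $\mc{A}^\ast\subseteq\mc{A}$ (and symmetrically for the converse). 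Both amount to the same observation that complementation swaps the pair $(\Gamma,\mc{A})$ with $(\mc{A}^\ast,\Gamma^\ast)$; your version just makes this more transparent.
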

\begin{proof}
We shall show that (\ref{eq:noDisjointSets}) $\Rightarrow$ (\ref{eq:selfOrth}).  It follows that if $A \in \Gamma$, then 
$\overline{A}\not\in \Gamma$ 
as $A\cap \overline{A}=\emptyset$.  But $\Gamma^\ast= \{ B\mid \overline{B}\not\in \Gamma\}$. Since 
$\overline{A}\not\in \Gamma$ it follows that $A\in \Gamma^\ast$ and  $\Gamma\subseteq \Gamma^\ast$. Conversely, let $\Gamma\subseteq  \Gamma^\ast$. Then from the definition of 
$\Gamma^\ast$, it follows that for any $A\in \Gamma$, we must have $\overline{A} \not\in\Gamma$ i.e. $\overline{A} \in\mc{A}$.  
Further all subsets of $\overline{A}$ are also in $\mc{A}$. Now assume that there exists some $B\in \Gamma$ such that 
$A\cap B=\emptyset$. Then $B\subseteq \overline{A}$. But all subsets of $\overline{A} \in \mc{A}$ i.e. they are not in $\Gamma$
which contradicts that $B\in \Gamma$. Therefore there exists no subset $B\in \Gamma$ such that $A\cap B =\emptyset$ proving
that (\ref{eq:selfOrth}) $\Rightarrow$ (\ref{eq:noDisjointSets}).

Now we shall show that  (\ref{eq:selfOrth}) $\Leftrightarrow $ (\ref{eq:dualContaining}). Assume that 
(\ref{eq:selfOrth}) holds. Then since $\Gamma\cap \mc{A}=\emptyset$ and $\Gamma\cup \mc{A} = 2^P = \Gamma^\ast \cup \mc{A}^\ast$, we have that
$\mc{A}  = (\Gamma^\ast \cup \mc{A}^\ast)\setminus \Gamma  = (\Gamma^\ast \setminus \Gamma) \cup \mc{A}^\ast$, where
we used the fact that $\Gamma^\ast\cap \mc{A}^\ast=\emptyset$ and $\Gamma\subseteq \Gamma^\ast$. It now follows that $\mc{A}^\ast \subseteq \mc{A}$ and 
(\ref{eq:dualContaining}) holds. Now assume that (\ref{eq:dualContaining}) holds, then again we have 
$\Gamma\cup \mc{A} =  \Gamma^\ast \cup \mc{A}^\ast$ and this time we can write $\Gamma^\ast=( \Gamma\cup \mc{A} )\setminus \mc{A}^\ast = (\Gamma^\ast \setminus \Gamma) \cup \mc{A}^\ast$ and therefore $\Gamma^\ast \supseteq \Gamma$ and  
 (\ref{eq:selfOrth}) holds. 
\end{proof}

We often refer to an access structure that is realizable by a quantum secret sharing scheme as a quantum access structure.
Smith \cite[Theorem~1]{smith00}  characterized  the adversary structure of quantum secret sharing schemes as in (\ref{eq:dualContaining}).  
Condition \eqref{eq:selfOrth} is somewhat reminiscent of the requirement for self-orthogonal classical codes for
quantum error correction.  If $\Gamma =\Gamma^\ast$, then we say that the access structure is self-dual.

A quantum secret sharing scheme which encodes a pure state secret into a global pure state is said to be a pure state scheme and a mixed state scheme if it encodes into a global mixed state. Self dual access structures can be realized by pure state schemes, where as 
non-self-dual access structures can be realized only as mixed state schemes. 
 A theorem \cite[Theorem~3]{gottesman00} due to Gottesman  shows that every mixed state scheme can be derived from a pure state scheme. So we do not lose any generality by focussing on the pure state schemes. 
The simplest access structures are the $((k,n))$ threshold access structures---in this case, the authorized sets are any subset of
size $\geq k$ and unauthorized sets are subsets of cardinality less than $k$. 
 Smith \cite{smith00} and independently Gottesman \cite{gottesman00} showed how to construct quantum secret sharing schemes with general access structures. 

In studying general access structures it is often convenient to work with the minimal access structures, which are the generating sets of the access structures. We define the minimal access structure $\Gamma_m$ of the access structure $\Gamma$ as 
\ben
\quad \Gamma_m =\{A \in \Gamma \mid  B \not\subset A \mbox{ for any } B\in \Gamma \}. \label{eq:minAccessStruct}
\een
If every party in $P$ occurs in at least one minimal authorized set of $\Gamma$, then we say that the access structure is connected.
We restrict our attention to such access structures in this paper.
Our primary goal in this paper is to explore connections of quantum secret sharing  schemes with matroids and characterizing
the associated access structures in terms of matroids if it is possible. We also address the construction of secret sharing schemes.
Our constructions make use of  CSS codes reminiscent of the constructions of Smith for general access structures. 

\section{Matroids and Secret Sharing}\label{sec:mat}

Matroids have been associated to secret sharing schemes \cite{cramer08,brickell91}, also see \cite{stinson92}
for a brief overview of some of the main results. Such schemes which are induced by a matroid are called matroidal. 
Useful results with respect to characterization and performance of secret sharing schemes can be derived by means of such an association, \cite{brickell91,beimel08}. 
Also, such an association also implies an implicit correspondence between matroids and access structures. In fact, classically, most of
the associations focus on this correspondence and tend to ignore the scheme realizing the access structure.
By far we do same however, since a given access structure might not be a quantum access structure
we do bear in mind that we cannot entirely ignore the fact that the access structure is being realized through a quantum scheme.
It is important to note that not every secret sharing scheme can be associated to a matroid. 

\subsection{Matroids}
First we recall a few facts about matroids, readers interested in a comprehensive introduction to matroids can refer to \cite{oxley04}.

A set $V$ and $\mc{C} \subseteq 2^V$ form a matroid $\mc{M}(V, \mc{C})$ if and only if the following conditions hold. For any 
$A, B \in \mc{C}$ and $A\neq B $
\begin{compactenum}[M1)]
\item $A\not \subseteq B$.
\item If $x\in A\cap B$, then there exists a $C\in \mc{C}$ such that $C\subseteq (A\cup B)\setminus \{x\} $.
\end{compactenum}
We say that $V$ is the ground set  and $\mc{C}$ the set of circuits of the matroid. A proper subset of any circuit is
said to be independent while a set containing any circuit is said to be dependent. 
With every matroid we define a nonnegative integer valued function called the rank function $\text{rk}:V\rightarrow \mathbb{N}$ as
\ben
\rk{X} = |I|, \label{eq:rankfn}
\een
where $ I \subseteq X \subseteq V$  is a maximal independent subset of $ X$. 
A matroid is said to be (linearly) representable over a field
 $\mathbb{F}$ if the ground set can be identified with the columns of a matrix (over $\mathbb{F}$) and the circuits with the minimal dependent columns of the matrix. In this paper we are only interested in finite fields.
We can also define matroids in terms of their bases, which are maximal independent sets of $V$. 
A set $V$ and $\mc{B} \subseteq 2^V$ form a matroid $\mc{M}(V, \mc{B})$ if and only if the following conditions hold.
\begin{compactenum}[B1)]
\item $\mc{B} \neq  \emptyset$.
\item If $B_1,B_2 \in \mc{B}$ such that $x\in B_1\setminus B_2$, then there exists a $y\in B_2\setminus B_1$ such that $(B_1\setminus x) \cup \{ y\} \in \mc{B} $.
\end{compactenum}
Given a matroid $\mc{M}(V,\mc{B})$ we define its dual matroid $\mc{M}(V,\mc{B})^\ast$ as the matroid with ground set $V$ and bases $\mc{B}^\ast =\{V\setminus B\mid B\in \mc{B} \}$ i.e. $\mc{M}(V,\mc{B})^\ast= \mc{M}(V,\mc{B}^\ast)$.

\subsection{Secret sharing schemes from matroids}
Given a matroid $\mc{M}$ we can associate a secret sharing scheme to $\mc{M}(V,\mc{C})$. 
We assume that the ground set of the matroid is given by $V= \{0,1, \ldots,n-1, n \}$. 
We identify one of the elements of the ground set, say $i\in V$,  as the dealer and then list all the circuits of $\mc{M}$ that contain $i$. 
Let this be denoted as 
\ben
\Gamma_{i,m} = \{ C \mid C\cup i \in \mc{C} \}.\label{eq:mat2ssMin}
\een
Consider the access structure given by 
\ben
\Gamma_i = \{A\mid V \supseteq A \supseteq C \mbox{ for some } C\in \Gamma_{i,m} \}.\label{eq:mat2ss}
\een
We can easily verify that $\Gamma_i$ is a monotonic and that its minimal access structure is given
by $\Gamma_{i,m} $.
Since any monotonic access structure can be realized as a secret sharing scheme
every matroid defines an access structure. This result is stated in the following fact,  see \cite{cramer08}.
\begin{fact}\label{fc:mat2ss}
Every matroid $\mc{M}(V,\mc{C})$ induces an access structure $\Gamma_i$ as defined in equation
\eqref{eq:mat2ss}. 
\end{fact}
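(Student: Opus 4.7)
The claim is essentially a verification that the construction in \eqref{eq:mat2ssMin}--\eqref{eq:mat2ss} actually produces a legitimate access structure, so my plan is to check the two properties this requires: monotonicity of $\Gamma_i$ and well-definedness of the induced minimal access structure $\Gamma_{i,m}$. The final conclusion then follows from the standard classical fact (cited just before the statement, e.g.\ from \cite{stinson92}) that every monotonic family of subsets is realizable as the access structure of some secret sharing scheme.

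First I would verify monotonicity of $\Gamma_i$. Suppose $A \in \Gamma_i$ and $A \subseteq B \subseteq V \setminus \{i\}$. By \eqref{eq:mat2ss} there exists $C \in \Gamma_{i,m}$ with $C \subseteq A$. Then $C \subseteq B$ as well, so $B \in \Gamma_i$ directly from the definition; this step is essentially immediate.

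Next I would argue that $\Gamma_{i,m}$ really is the minimal access structure of $\Gamma_i$ in the sense of \eqref{eq:minAccessStruct}. The nontrivial content here is the antichain property: no element of $\Gamma_{i,m}$ is a proper subset of another. This is where matroid axiom M1 enters. If $C, C' \in \Gamma_{i,m}$ with $C \subsetneq C'$, then $C \cup \{i\}$ and $C' \cup \{i\}$ are both circuits of $\mc{M}$ and $C \cup \{i\} \subsetneq C' \cup \{i\}$, contradicting M1. Hence $\Gamma_{i,m}$ is an antichain, and combined with the definition of $\Gamma_i$ it exactly coincides with the minimal access structure as defined in \eqref{eq:minAccessStruct}.

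Finally, since $\Gamma_i$ is monotonic, a classical secret sharing scheme realizing it exists by the general realization result for monotone access structures, which closes the argument. I do not anticipate a real obstacle here; the only subtle step is the use of M1 to exclude proper inclusions among the generators, and this is a one-line appeal to the circuit axioms. The statement is therefore more of a bookkeeping lemma that sets up the matroid-to-access-structure correspondence used later in the paper, and my proposed proof is correspondingly short.
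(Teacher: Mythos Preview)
Your proposal is correct and mirrors exactly the informal justification the paper gives in the paragraph preceding the Fact: verify monotonicity of $\Gamma_i$, observe (via axiom M1) that $\Gamma_{i,m}$ is its minimal access structure, and then invoke the classical realization result for monotone access structures. The paper does not supply a formal proof beyond this sketch and a citation to \cite{cramer08}, so your write-up simply fills in the details the authors left implicit.
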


Please note that the above association is in a sense nonconstructive, it does not specify how to derive the associated secret sharing scheme; it merely states that there exists a secret sharing scheme that can realize the induced access structure $\Gamma_i$. Further, depending on which element of the ground set of the matroid is identified as the dealer,  we may obtain many schemes with
possibly different access structures from the same matroid. 

A natural question that we are faced with is how to make this association constructive and determine the bounds on the information 
rate of the resulting access structure. Brickell and Davenport \cite{brickell91} showed that if the matroid is representable over a 
finite field \footnote{Strictly, \cite[Theorem~2]{brickell91} only requires the matroid to be representable over a near field.}, then we obtain ideal secret sharing schemes and access structures.

However, if the matroid is not representable, then we can no longer be certain if the matroid induces an ideal secret sharing scheme. Seymour proved that there exist non-representable 
matroids which cannot induce an ideal secret sharing scheme \cite{seymour92}, while Simonis and Ashikhmin \cite{simonis98} showed that there exist 
non-representable  matroids, such as the non-Pappus matroid, which induce ideal schemes. However, these latter 
matroids---while not affording a linear representation---can be multilinearly represented. Matroids which induce 
ideal access structures are called ss-representable matroids \cite{padro07}.  They may not be linearly representable.

\subsection{Matroids from secret sharing schemes}
Given that we can obtain secret sharing schemes from matroids, we could ask if the converse is possible.  
As we mentioned earlier,
such a correspondence does not exist for all secret sharing schemes. We review some of the related work in this context. The 
correspondence between the matroids and secret sharing schemes naturally implies that the access structure is associated to the circuits 
of the matroid. This association could involve the scheme explicitly. However a result due to Martin \cite{martin91}, see also \cite{stinson92},  shows that we can associate the access structure to a matroid independently of the scheme used to realize that structure. This involves a function, say  $f$,  defined on the  space of access structures; $f$ maps an access structure
to an ordered pair, which may or may not be a matroid. If $f(\Gamma)$ is a matroid, then we say that $\Gamma$ is matroid-related.
The minimal access structure will play a more important role in this regard. As usual we denote by  $P$ the set of  participants 
and by $D$ the dealer. Define the extended access structure $\Gamma_e= \{ A\cup D\mid \mbox{ for all }  A\in \Gamma_m\}$.
Further let 
\ben
\mathbb{J}(A,B) & =& A\cup B \setminus \left( {\bigcap}_{C\in \Gamma_e : C\subseteq A\cup B} C \right)\\
\mc{C}_\Gamma &=& \left\{\ba{l} \mbox{ minimal sets of }\mathbb{J}(A,B) \mbox{ for} \\
\mbox{ all } A,B\in \Gamma_m \mbox{ and }A\neq B \ea\right\}.
\een
We let $f(\Gamma) = (P\cup D, \mc{C}_\Gamma)$. 
If $\mc{C}_\Gamma$ satisfies the axioms M1 and M2, then we associate $\Gamma$ to the  matroid $\mc{M}_\Gamma$ 
whose ground set is  $P\cup D$ and the set of circuits are given by 
$\mc{C}_\Gamma$ i.e.
\ben
\mc{M}_{\Gamma}= \mc{M}(P\cup D, \mc{C}_\Gamma).\label{eq:ss2Matroid}
\een
  
This definition of the matroid is in terms of the circuits that can be formed from the ground set. 
{\em We could always define a structure from the secret sharing scheme, equivalently its access structure,
as above but the resulting structure is not necessarily a
matroid}. It is a matroid only under certain conditions. Only when $(P\cup D, \mc{C}_\Gamma)$ induce a matroid we say that $\Gamma$ is matroid related.

Classically an access structure induces a matroid only when it satisfies certain conditions. Before we can state this condition precisely we need the notion of  minors.  Let $\Gamma$ be an access structure, then we define two operations of 
deletion and contraction, which we denote  by ``$\setminus$'' and ``$/$'' respectively.  Given a set $Z\subseteq P$  we define 
\ben
\Gamma\setminus Z &=& \{A \subseteq P\setminus Z  \mid A  \in \Gamma \},\label{eq:delete}\\
\Gamma / Z &=& \{ A \subseteq P\setminus Z \mid A\cup Z \in \Gamma \}.\label{eq:contract}
\een
An access structure $\Gamma'$ derived from $\Gamma$ through a sequence of deletions and contractions is called a
minor of $\Gamma$. 
A result by Seymour \cite{seymour76} shows that the access structures are matroid related if the access structure satisfies
a forbidden minor relation. 
\begin{lemma}[Seymour]\label{lm:matroidalAccess}
An access structure $\Gamma \subseteq 2^P$ is matroid related if and only if it  does not have the following 
minors:
\begin{compactenum}[a.]
\item $\Gamma_a= \{ \{1,2 \}, \{2,3 \}, \{3,4 \} \}$
\item $\Gamma_b=  \{1,2 \}, \{1,3 \}, \{1,4 \}, \{2,3 \} \}$
\item $\Gamma_c= \{ \{1,2 \}, \{1,3 \}, \{2,3,4 \} \}$
\item $\Gamma_d = \{ \{ 1,\ldots, s\}, \{1, s+1 \}, \ldots, \{s, s+1\} \}$
\end{compactenum}
where $P=\{ 1,\ldots, 4\}$ except in $d$ where $P=\{ 1,\ldots s, s+1\}$
and $s\geq 3$.
\end{lemma}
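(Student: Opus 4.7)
The plan is to argue in both directions, with the ``only if'' direction being the more mechanical one. First I would verify that the class of matroid-related access structures is closed under taking minors: the operations $\Gamma\setminus Z$ and $\Gamma/Z$ on an access structure translate into deletion and contraction on the candidate pair $f(\Gamma)=(P\cup D,\mc{C}_\Gamma)$, since intersections in the definition of $\mathbb{J}(A,B)$ behave well under these restrictions. Granted this minor-closure, it suffices to exhibit, for each of the four named access structures, a direct certificate that it is not matroid related. Concretely, for each of $\Gamma_a,\Gamma_b,\Gamma_c,\Gamma_d$, I would compute $\Gamma_e$, enumerate the sets $\mathbb{J}(A,B)$ for all pairs $A,B\in\Gamma_m$, and verify by inspection that the resulting $\mc{C}_\Gamma$ violates either M1 (incomparability of circuits) or M2 (weak elimination). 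These are all small finite checks.

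For the converse, I would argue contrapositively: assume $f(\Gamma)$ fails to be a matroid, and construct a minor of $\Gamma$ isomorphic to one of $\Gamma_a,\Gamma_b,\Gamma_c,\Gamma_d$. Two cases arise depending on which axiom fails. If M1 fails, there exist $A'\subsetneq A$ with $A,A'\in\mc{C}_\Gamma$; unwinding the definition yields a pair of minimal authorized sets whose union creates a nested pair of ``eliminated'' sets, and by deleting participants outside the support of the witness and contracting those forced to be present, one shrinks $\Gamma$ down to a four- or $(s+1)$-point access structure matching $\Gamma_b$, $\Gamma_c$, or $\Gamma_d$. If M2 fails, there are $A,B\in\mc{C}_\Gamma$ and $x\in A\cap B$ such that no $C\in\mc{C}_\Gamma$ lies in $(A\cup B)\setminus\{x\}$; an analogous deletion/contraction reduction produces $\Gamma_a$ as the canonical four-point witness, with $\Gamma_d$ arising when the exchange failure is caused by a long ``fan'' of two-element minimal authorized sets through a common participant.

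The main obstacle is the reduction in the backward direction: one must show that \emph{every} failure of M1 or M2 on an arbitrarily large ground set can be driven down, by a carefully chosen sequence of deletions and contractions, to one of exactly four configurations. The delicate point is that $\mathbb{J}(A,B)$ depends globally on $\Gamma_e$ via the intersection $\bigcap_{C\in\Gamma_e,\,C\subseteq A\cup B}C$, so contractions can create or destroy circuits in $\mc{C}_\Gamma$ in subtle ways, and one must verify that the minor operations preserve the chosen axiom violation while eliminating irrelevant participants. I would handle this by fixing a minimal witness (a pair $A,B$ of smallest combined support realizing the failure), showing that any participant outside this support can be safely deleted, and then arguing that the residual access structure on the remaining four or $s+1$ elements must agree with one of the listed normal forms. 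Here I would appeal to Seymour's original combinatorial enumeration rather than reprove the case analysis from scratch.
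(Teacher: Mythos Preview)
The paper does not prove this lemma at all: it is quoted as a known result of Seymour \cite{seymour76}, in the access-structure reformulation of Mart\'{i}-Farr\'{e} and Padr\'{o} \cite{padro07}, and no argument is supplied beyond the citation. There is therefore nothing in the paper's own text to compare your proposal against.

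As for the proposal itself, the outline is the standard shape for a forbidden-minor theorem: verify minor-closure of the property, check each listed obstruction by hand, and for the converse reduce an arbitrary failure to one of the canonical witnesses. Two cautions, though. First, your claim that $\Gamma\setminus Z$ and $\Gamma/Z$ correspond cleanly to matroid deletion and contraction on $f(\Gamma)$ is not automatic, because $\mc{C}_\Gamma$ is defined through the global intersection in $\mathbb{J}(A,B)$ rather than as the circuit set of a matroid known in advance; establishing that minor operations on $\Gamma$ commute with $f$ in the required sense is already part of the content of Seymour's theorem, not a free preliminary. Second, in the backward direction you explicitly defer the decisive case analysis to ``Seymour's original combinatorial enumeration,'' which means your proposal is really a reduction to the cited reference rather than an independent proof. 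That is perfectly in line with how the paper treats the lemma, but it should be stated as a citation, not dressed as a proof sketch.
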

Please note that in the preceding result, the minimal access structures are given rather than the complete access structure.
Seymour originally stated this result in terms of matroid ports. The reformulation we have given here in terms of the access structures 
is due to  Mart\'{i}-Farr\'{e} and Padr\'{o} \cite{padro07}. This result together with Lemma~\ref{lm:selfOrth} immediately provides us with a criterion as to which quantum access structures can be induced by matroids.

Self-orthogonality, however, is not a property inherited by minors of access structures.   For instance contraction does not always
preserve the self-orthogonality of the access structures.  Consider the following (minimal) access structure:  $\Gamma = \{ \{1,2,3 \},  \{2,3,4 \}, \{3,4,5 \}\}$. Then 
$\Gamma/3 = \{ \{1,2 \}, \{2,4 \}, \{4,5 \} \}$.  In this case we have two disjoint authorized sets; such an access structure cannot be realized by a quantum secret sharing scheme as it would lead to a violation of the no cloning theorem. 
Therefore, it is not possible to determine a result similar to Lemma~\ref{lm:matroidalAccess} for self-orthogonal access structures i.e. 
the forbidden minors for access structures that are self-orthogonal.  Incidentally, there exist other important classes of matroids such as transversal matroids which are not minor closed. 

Brickell and Davenport \cite[Theorem~1]{brickell91} showed that every classical ideal access structure induces a matroid. 
In figures~\ref{fig:accessVenn}~and~\ref{fig:accessVennQ} we summarize the relation between permissible access structures,
matroidal access structures, and ideal access structures for classical schemes and quantum schemes. 
We do not know if every access structure that is realized by an ideal quantum secret sharing scheme induces a matroid. 
Therefore we show that  the set of ideal quantum access structures does not lie entirely in the set of matroidal access structures
in figure~\ref{fig:accessVennQ}.

\medskip
\begin{center}
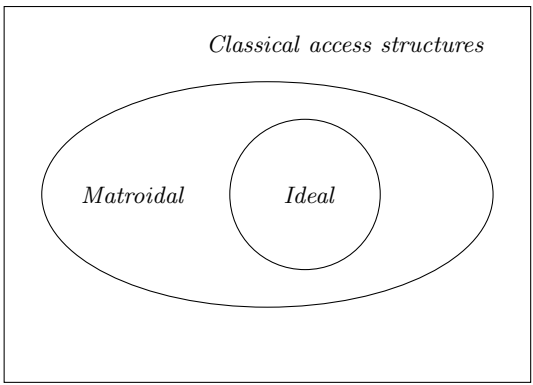
\begin{figure}[htb]
\begin{tikzpicture}
    \draw (-3.5,-2.5) rectangle (3.5,2.5); \draw (3,2) node[left] {\em Classical access structures};
    \draw (0,0) ellipse  (3cm and 1.5cm);  \draw (-1,0)  node[left] {\em Matroidal};
    \draw (0:0.5cm) circle (1cm);\draw (1,0) node [left] {\em Ideal}; 
\end{tikzpicture}
\caption{Relation between ideal, matroidal and general classical access structures}\label{fig:accessVenn}
\end{figure}
\end{center}

\begin{center}
\begin{figure}[htb]
\begin{tikzpicture}
   \draw (6.5,-2.5) rectangle (13.5,2.5); \draw (13,2) node[left] {\em Quantum access structures};
    \draw (10,0) ellipse  (3cm and 1.5cm);  \draw (9,0)  node[left] {$Matroidal$};
     \draw (10,-1.25cm) circle (1cm);     
   \begin{scope}
    \clip (6.5,-2.5) rectangle (13.5,2.5); 
    \clip (10,0) ellipse  (3cm and 1.5cm); 
     \fill[gray] (10,-1.25cm) circle (1cm) ;\draw (10.5,-1.2) node [left] {\em Ideal};
     \end{scope}
\end{tikzpicture}
\caption{Relation between ideal, matroidal and general quantum access structures. It is possible that all ideal quantum access structures are also matroidal.}\label{fig:accessVennQ}
\end{figure}
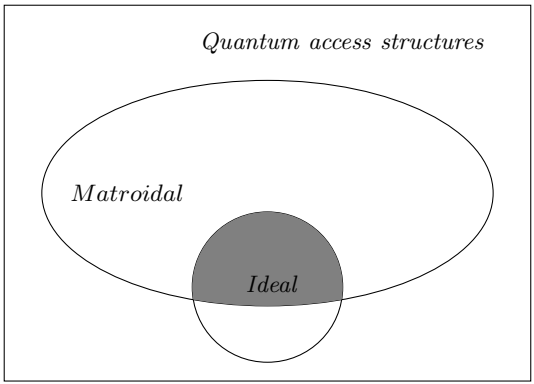
\end{center}

\section{Relating Matroids and Quantum Secret Sharing}\label{sec:main}
\subsection{Matroidal quantum secret sharing schemes}
In this section we present the central result of our paper, Theorem~\ref{th:isdMat2qss}. It shows that a class of matroids induce ideal pure state quantum secret
sharing schemes. First we need the following preliminaries. 
We say a  matroid is self-dual if it is isomorphic to its dual matroid. If it is equal to its dual matroid then we say it is 
an identically self-dual (ISD) matroid. 
\begin{fact}\label{fc:dualMat2ss}
Let $\Gamma_i$ and $\Gamma_i^d$ be the access structures induced by a matroid $\mc{M}(V,\mc{C})$ and its dual matroid 
$\mc{M}^\ast$  by treating the $i$th element as the dealer. Then  we have 
\ben
\Gamma_i^d = \Gamma_i^\ast \label{eq:dualMat2ss}
\een
\end{fact}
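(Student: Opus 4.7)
The plan is to recast membership in $\Gamma_i$ as a closure/rank condition on the matroid and then exploit the standard rank duality ${\rm rk}_{\mc{M}^\ast}(X) = |X| + {\rm rk}_{\mc{M}}(V\setminus X) - {\rm rk}_{\mc{M}}(V)$ that relates $\mc{M}$ and $\mc{M}^\ast$. First I would observe that, from \eqref{eq:mat2ss} and \eqref{eq:mat2ssMin}, a set $A\subseteq V\setminus\{i\}$ belongs to $\Gamma_i$ if and only if $A\cup\{i\}$ contains a circuit of $\mc{M}$ passing through $i$; this is the textbook characterization of $i$ lying in the closure ${\rm cl}_{\mc{M}}(A)$, equivalently ${\rm rk}_{\mc{M}}(A\cup\{i\}) = {\rm rk}_{\mc{M}}(A)$. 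Applying the same characterization to $\mc{M}^\ast$ gives that $A\in\Gamma_i^d$ iff ${\rm rk}_{\mc{M}^\ast}(A\cup\{i\}) = {\rm rk}_{\mc{M}^\ast}(A)$.

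Substituting the rank duality formula into both sides of this latter identity and cancelling the $|A|+1$ versus $|A|$ terms, I find that the condition $A\in\Gamma_i^d$ reduces to ${\rm rk}_{\mc{M}}(V\setminus A) = {\rm rk}_{\mc{M}}(V\setminus(A\cup\{i\})) + 1$, i.e.\ to the statement that adjoining $i$ to $V\setminus(A\cup\{i\}) = \overline{A}$ strictly increases the rank, which says $i\notin{\rm cl}_{\mc{M}}(\overline{A})$. By the characterization obtained in the first step, this is the same as $\overline{A}\notin\Gamma_i$, and by the definition of the dual set in \eqref{eq:dual} this is precisely $A\in\Gamma_i^\ast$. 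Chaining the equivalences yields $\Gamma_i^d=\Gamma_i^\ast$.

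The only non-routine element is the closure reformulation of $\Gamma_i$, which serves as the bridge between the combinatorial (circuit-based) description of the induced access structure and the rank-duality machinery; once that bridge is in place, the argument is essentially a one-line rank calculation. The main bookkeeping obstacle is keeping track of complements: $\overline{A}$ is taken relative to the participant set $P = V\setminus\{i\}$ on the access-structure side, while the rank duality identity uses complements relative to the full ground set $V$, and one must verify the two notions of complement are compatible, which is immediate since $i\notin A$ by assumption.
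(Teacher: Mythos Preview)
The paper does not supply its own proof of this fact; it is simply stated and attributed to \cite{cramer08}. There is therefore nothing to compare against, and your proposal stands on its own merits.

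Your argument is correct. The key step---rewriting $A\in\Gamma_i$ as $i\in{\rm cl}_{\mc{M}}(A)$, equivalently $\rk{A\cup\{i\}}=\rk{A}$---is the standard matroid characterization of ``$A\cup\{i\}$ contains a circuit through $i$'', and once that is in place the rank-duality formula $\text{rk}_{\mc{M}^\ast}(X)=|X|+\text{rk}_{\mc{M}}(V\setminus X)-\text{rk}_{\mc{M}}(V)$ does the rest exactly as you describe. Your handling of the two notions of complement (relative to $P=V\setminus\{i\}$ on the access-structure side, relative to $V$ in the rank formula) is correct: since $i\notin A$, one has $V\setminus(A\cup\{i\})=P\setminus A=\overline{A}$ and $V\setminus A=\overline{A}\cup\{i\}$, which is precisely what the computation needs. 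One minor remark: the closure reformulation also covers the degenerate cases where $i$ is a loop ($\Gamma_i=2^P$) or a coloop ($\Gamma_i=\emptyset$) of $\mc{M}$, so no connectedness assumption is required.
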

Fact~\ref{fc:dualMat2ss} was stated in \cite{cramer08}. Together with Lemma~\ref{lm:selfOrth}, 
and the fact that every self-dual access structure can be realized as a pure state scheme \cite[Theorem~8]{gottesman00},
it implies the following result,  stated explicitly due to its relevance for us.

\begin{corollary}\label{co:mat2qss}
An identically self-dual matroid $\mc{M}$ induces a pure state quantum secret sharing scheme. 
\end{corollary}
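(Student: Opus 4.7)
The plan is to chain together the three ingredients the paper has just assembled, since they align almost perfectly. Let $\mc{M}(V,\mc{C})$ be an identically self-dual matroid, so by definition $\mc{M}=\mc{M}^\ast$. First I would pick any element $i\in V$ to serve as the dealer, producing the access structure $\Gamma_i$ on $P=V\setminus\{i\}$ defined in equation~\eqref{eq:mat2ss}, and likewise the access structure $\Gamma_i^d$ induced by the dual matroid $\mc{M}^\ast$ with the same choice of dealer. Because $\mc{M}=\mc{M}^\ast$ the set of circuits is the same, hence $\Gamma_{i,m}^d=\Gamma_{i,m}$ and therefore $\Gamma_i^d=\Gamma_i$.

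Next I would invoke Fact~\ref{fc:dualMat2ss}, which says $\Gamma_i^d=\Gamma_i^\ast$. Combining this with the previous step gives $\Gamma_i=\Gamma_i^\ast$, i.e.\ $\Gamma_i$ is a self-dual access structure. In particular $\Gamma_i\subseteq\Gamma_i^\ast$, so by the equivalence $(\ref{eq:selfOrth})\Leftrightarrow(\ref{eq:noDisjointSets})$ in Lemma~\ref{lm:selfOrth} no two minimal authorized sets of $\Gamma_i$ are disjoint, and the no-cloning obstruction to quantum realizability is ruled out.

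Finally I would appeal to Gottesman's theorem \cite[Theorem~8]{gottesman00}, quoted in the paragraph just before the corollary, which guarantees that every self-dual quantum access structure admits a realization by a pure state quantum secret sharing scheme. Applying it to $\Gamma_i$ yields the desired pure state scheme induced by $\mc{M}$.

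The statement itself poses no real obstacle: every step is either a definition unwrapping, a one-line appeal to Fact~\ref{fc:dualMat2ss}, to Lemma~\ref{lm:selfOrth}, or to the cited theorem of Gottesman. The only subtlety I would flag in the write-up is that the construction depends on the choice of dealer $i$, and different choices may yield different (but all self-dual, hence pure-state realizable) access structures; this should be mentioned for clarity but does not affect the conclusion. The genuinely nontrivial work lies elsewhere, in Theorem~\ref{th:isdMat2qss}, which must make this existence statement \emph{constructive and efficient} by exhibiting an explicit pure state scheme of information rate one when $\mc{M}$ is representable over a finite field; the corollary as stated only asserts existence and follows immediately from the assembled machinery.
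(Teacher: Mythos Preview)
Your proposal is correct and follows exactly the route the paper sketches in the sentence immediately preceding the corollary: combine Fact~\ref{fc:dualMat2ss} with the identically self-dual hypothesis to get $\Gamma_i=\Gamma_i^\ast$, then invoke Lemma~\ref{lm:selfOrth} and Gottesman's \cite[Theorem~8]{gottesman00} to obtain a pure state realization. Your remark about the dependence on the choice of dealer $i$ is a reasonable clarification but does not depart from the paper's argument.
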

   
However, the preceding result does not give us  a method to construct a quantum secret sharing scheme
from the matroid, neither does it tell us if the scheme is ideal. 
The following theorem gives the general procedure to transform a representable identically self-dual matroid into a quantum secret sharing scheme. We denote a finite field with $q$ elements as $\F_q$. Following standard notation, we use $[n,k,d]_q$ to denote a 
classical code over $\F_q$ and $[[n,k,d]]_q$ to denote a quantum code over $\F_q$. 
 If $C$ is a code, we denote a generator matrix of $C$ by $G_C$. 
The code obtained by deleting the $i$th coordinate of $C$ is called a punctured code of $C$ and denoted as $\rho_i(C)$. 
Suppose we consider the subcode of  $C$ with the $i$th coordinate zero, then the code obtained by puncturing the $i$th coordinate
of the subcode is called a shortening of $C$ and denoted as $\sigma_i(C)$. We have the following useful relations between the punctured and shortened codes and their duals. 
\ben
\sigma_i(C) \subset \rho_i(C) \mbox{ and }  \sigma_i(C)^\perp = \rho_i(C^\perp). \label{eq:puncShortRelns}
\een
If $x\in \F_q^n$, then we denote the support of $x$ as $\supp(x)= \{ i\mid x_i\neq 0\}$. A codeword $x$ in $C$ is said to be
a  minimal support if there exists no nonzero codeword $y$ in $C$ such that $\supp(y) \subsetneq \supp(x)$. If in addition its leftmost 
nonzero component is 1, then it is said to be a minimal codeword.
Minimal codewords were introduced by Massey \cite{massey93}. They facilitate the study of classical secret sharing schemes, especially in characterizing the access structures.  

\begin{theorem}\label{th:isdMat2qss}
Let $\mc{M}(V,\mc{C})$ be an identically self-dual matroid representable over a finite field $\F_q$,  
where $V=\{0,1,\ldots, n-1,n \}$. Suppose that 
$C \subseteq \F_q^{n+1}$  such that the generator matrix of $C$ is a representation of $\mc{M}$. Let
\ben
G_C = \left[\ba{c|c}1 & g\\\bf{0} & G_{\sigma_0(C)} \ea  \right]  \mbox{ and }  
G_{\rho_0(C)}=\left[\ba{c}g\\ G_{\sigma_0(C)}  \ea\right].
\een
Then there exists an ideal pure state quantum secret sharing scheme $\Sigma$ on $P=\{ 1,\ldots, n\}$  whose 
access structure $\Gamma_0$ and  minimal
access structure   $\Gamma_{0,m}$, are defined by equations~\eqref{eq:mat2ss}~and~\eqref{eq:mat2ssMin}
respectively. 
The encoding for $\Sigma$ is determined by the stabilizer code with the stabilizer matrix given by
\ben
S=\left[\ba{c|c} G_{\sigma_0(C)}& \bf{0}\\\bf{0}&G_{\rho_0(C)^\perp} \ea\right].\label{eq:stabMatrix}
\een
The reconstruction procedure for an authorized set $A$  is the transformation on $S$ such that the encoded operators for
the transformed stabilizer code are $X_1=X\otimes I ^{\otimes^{n-1}}$ and $Z_1=Z\otimes I ^{\otimes^{n-1}}$.
\end{theorem}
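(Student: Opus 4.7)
The plan is to recognize the stabilizer \eqref{eq:stabMatrix} as the CSS code with inner code $C_2 = \sigma_0(C)$ and outer code $C_1 = \rho_0(C)$, whose nesting $C_2 \subseteq C_1$ is exactly \eqref{eq:puncShortRelns}. A dimension count will establish that the code encodes exactly one qudit over $\F_q$: the block form of $G_C$ gives $\dim \sigma_0(C) = \dim C - 1$, and one checks that an identically-self-dual matroid cannot have a coloop at the dealer element $0$, which forces $g$ to be linearly independent from the rows of $G_{\sigma_0(C)}$ and therefore $\dim \rho_0(C) = \dim C = (n+1)/2$. Since each share and the secret are both single $\F_q$-qudits, the information rate is $1$ and the scheme is ideal; being stabilizer-encoded, it is automatically a pure state scheme.

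The core of the proof lies in reading the matroidal access structure off the CSS code. The circuits of $\mc{M}$ are the minimal linear dependencies of the columns of $G_C$ and hence the minimal supports of $C^\perp$; because $\mc{M} = \mc{M}^\ast$ and $\mc{M}^\ast$ is represented by the parity check matrix of $C$, these same circuits are simultaneously the minimal supports of $C$. Logical $X$ coset representatives of the CSS code are parameterized by codewords $(a,v) \in C$ with $a \neq 0$, and logical $Z$ representatives by codewords $(a',v') \in C^\perp$ with $a' \neq 0$. I will then run a short exchange argument---subtracting appropriate shortened-code vectors, which have zero $0$-th coordinate, to cancel coordinates outside any circuit---to show that a codeword of $C$ with $c_0 \neq 0$ and $\supp(c) \subseteq A \cup \{0\}$ exists iff such a codeword of minimal support does; the identical statement inside $C^\perp$ follows from the same argument. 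Together with the preceding sentence, this gives the equivalence between $A$ being authorized in the sense of \eqref{eq:mat2ss} and the existence of both a logical $X$ and a logical $Z$ representative supported entirely on $A$.

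Given such an anti-commuting pair of Paulis on $|A|$ shares, the reconstruction unitary claimed by the theorem is obtained by the standard Clifford symplectic reduction on $A$ that sends $(X_L, Z_L) \mapsto (X \otimes I^{\otimes n-1}, Z \otimes I^{\otimes n-1})$. For the complementary no-information direction, I will invoke Fact~\ref{fc:dualMat2ss} applied to $\mc{M} = \mc{M}^\ast$, which yields $\Gamma_0 = \Gamma_0^\ast$; whenever $A$ is unauthorized, $\bar{A}$ is authorized, so the previous paragraph supplies logical $X$ and $Z$ representatives living entirely inside $\bar{A}$, whence every logical Pauli has a stabilizer-equivalent representative outside $A$ and no measurement on $A$ can discriminate encoded states. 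The main obstacle is the simultaneous code-and-dual-code requirement used in the second paragraph: it is precisely the identically-self-dual hypothesis (as opposed to mere self-duality up to isomorphism) that lets a single matroidal circuit through $0$ produce, on the same share set $A$, both a $C$-codeword realizing a logical $X$ and a $C^\perp$-codeword realizing a logical $Z$.
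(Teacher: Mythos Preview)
Your proposal is correct and tracks the paper's argument closely: both identify $S$ as the CSS code with $\sigma_0(C)\subset\rho_0(C)$, read logical $X$ (resp.\ $Z$) representatives off codewords of $C$ (resp.\ $C^\perp$) with nonzero $0$-th coordinate, and exploit the identical self-duality of $\mc{M}$ to place a matching pair on any circuit through $0$; your exchange argument is precisely the content of the paper's footnote reducing $(1\,|\,g)$ to a minimal codeword.

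The executions diverge only in presentation. For reconstruction, the paper carries out explicit column operations on $S$---using the relation $G_C(c')^t=\mathbf{0}$ to clear the first column---and displays the factorization $\ket{s}\otimes(\text{rest})$ directly, whereas you invoke Clifford symplectic reduction abstractly. For unauthorized sets, the paper argues minimality via detectable errors and then completeness by counting ($|\Gamma_0|=2^{n-1}$ together with no-cloning exhausts $2^P$), while you use $\Gamma_0=\Gamma_0^\ast$ to push both logicals onto $\bar{A}$ and appeal to a cleaning-lemma style fact that the reduced state on $A$ is then secret-independent. Your route is somewhat cleaner and makes the role of \emph{identical} (rather than merely isomorphic) self-duality more transparent; the paper's has the advantage of exhibiting an explicit reconstruction circuit, which is part of what the theorem statement promises.
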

\begin{proof}
The proof of this theorem is structured as follows. Since $\Sigma$ relies on the encoding of the stabilizer code derived from $S$,
we first show that $S$ defines a stabilizer code and identify certain properties of the codes $C$ and $C^\perp$ essential to 
recovering the secret. Then we show that if the secrets are encoded using the stabilizer encoding, then an element
 $A\in \Gamma_{0,m}$ does correspond to a minimal authorized set by explicitly reconstructing the secret with the shares in $A$
and proving that no proper subset of $A$ can reconstruct the secret. 

\smallskip
\noindent
Encoding the secret: 
We can easily  check that the matrix given in equation~(\ref{eq:stabMatrix}) does define a stabilizer code. We see that 
$\sigma_0(C)$ is an $[n,k-1,d]_q$ code, while $\rho_0(C)$ is an $[n,k,d-1]_q$ code with $\sigma_0(C)\subset \rho_0(C)$.
Therefore we have $\rho_0(C)^\perp\subset  \sigma_0(C)^\perp$ ensuring the orthogonality of 
$\sigma_0(C)$  and $\rho_0(C)^\perp$ in equation~(\ref{eq:stabMatrix}). 
The dimension of $S$ is given by $k-1+n-k=n-1$. Thus $S$ defines an $[[n,1,d']]_q$ quantum code, $Q$.

Since $\mc{M}(V,\mc{C})$ is an identically self-dual matroid,  both $C$ and $C^\perp$ represent $\mc{M}(V,\mc{C})$. 
Therefore, $g\neq0$, otherwise the zeroth column would be all zero in $C^\perp$ which would mean that
$\{0\}$ is a circuit, while from $C$, we would conclude that  $\{ 0\}$ is independent and not a circuit; a contradiction. 
Furthermore, without loss of generality we can choose $(1|g)$ to be a minimal codeword $c$ in $C$ \footnote{If $(1|g)$ is not minimal,
then there exists some codeword $(1|g')$ or $(0|a)$ such that its support is strictly contained in $\supp(1|g)$. If $\supp(0|a) \subset \supp(1|g)$, then we can find a codeword $(1|g')$, from a linear combination of $(1|g)$ and $(0|a)$, such that  $\supp(1|g') \subset \supp(1|g)$ and $\supp(0|a) \not\subset \supp(1|g')$.
In either case there is a codeword of the form $(1|g')$ whose support is strictly smaller than $\supp(1|g)$. If $(1|g')$ is minimal we
are done or we can repeat this process until we find one; the process will terminate in a finite number of steps as $n$ is finite.}.

As the support of a  minimal codeword in $C$ is a circuit of $\mc{M}(V,\mc{C})$, it follows that there exists a minimal codeword $c'$ in $C^\perp$ such that
$\supp(c')=\supp(c)$, in particular there exists a vector $(\beta | \beta g') \in C^\perp$ such that 
$\supp(\beta g')=\supp(g)$ for some $\beta \in \F_q^\times $ and $g'\in \rho_0(C^\perp)$.

\smallskip
\noindent
 The mapping for the secret sharing scheme is given as follows:
 \ben
 \ket{s} \mapsto \sum_{x\in \sigma_0(C)}\ket{s\cdot g + x }, \mbox{ where } s\in \F_q.
 \een
 Encoding of an arbitrary secret state follows by linearity of the encoding map.  The encoded $X$ operator for the quantum code is given by
 $\overline{X}=\otimes_{i=1}^n X^{g_i}$, or equivalently  $[\ba{c|c} g & \bf{0}\ea]$, its representation over $\F_q^{2n}$.

\smallskip
\noindent
Recovering the secret: Let $A \in \Gamma_{0,m}$, then $A\cup \{0 \} \in \mc{C}$ and there exists a minimal codeword $c' \in C^\perp$ such that 
$\supp(c') = A\cup\{ 0\}$.  Let $c'$ be a minimal codeword in $C^\perp$ such that $c_0'=1$. We know that there exists a $c\in C$ 
such  that $\supp(c) = \supp(c')$. We can choose $c_0=1$ since $C$ is a linear code.  
 Then,  we have $\rho_0(c)\not\in \sigma_0(C)$. Then both $\rho_0(c) $ and $g$ are in the 
 same coset of  $\sigma_0(C)$ in $\rho_0(C)$. This holds because the cosets of  $\sigma_0(C)$ in $\rho_0(C)$ 
 are in one to one correspondence
 with the cosets of $[{\bf 0}|\sigma_0(C) ]$ in $C$. The various coset representatives are given by $(\alpha | \alpha g)$, $\alpha \in \F_q$. Two coset representatives $r,r'$ represent the same coset if and only if $r_0= r_0'$. Therefore all the minimal codewords
 $c$, with $c_0=1$ are in the same coset as $(1|g)$. From this follows that 
  $\rho_0(c)$ is in the same coset as $(g)$. 
Therefore, the state $\ket{s}$ might as well be given by 
 \ben
 \ket{s} \mapsto \sum_{x\in \sigma_0(C)}\ket{s\cdot \rho_0(c) + x }.
 \een
 Denote the columns of $G_{\sigma_0(C)}$ by $s_i$, where $1\leq i\leq n$.
 Since $c' \in C^\perp$, we have 
 \be
G_C (c')^t = \left[ \ba{cccc} 1&c_1& c_2\dots & c_n \\ \bf{0} & s_1 & \dots & s_n \ea\right] \left[\ba{c}1\\c_1'\\ \vdots \\c_n' \ea\right]=\bf{0}.
 \ee
 The above equation can also be written as 
 \be
 \left[ \ba{cccc} c_1& c_2\dots & c_n \\  s_1 & \dots & s_n \ea\right] \left[\ba{c}-c_1'\\ \vdots \\-c_n' \ea\right]= \left[\ba{c} 1 \\\bf{0}\ea \right].
 \ee
 In other words, there exists a linear combination of the columns of $G_{\sigma_0(C)}$ such that 
 \ben
 \sum_{i\in \supp(\rho_0(c'))} c_i' s_i=0.\label{eq:recoveryLC}
 \een
Now let us rewrite the stabilizer and the encoded $X$ operator as follows.
 \be
\left[\ba{c}\overline{X} \\ \hline S \ea\right]
&=&\left[\ba{c|c}  \rho_0(c') & \bf{0} \\\hline G_{\sigma_0(C)}& \bf{0} \\{\bf{0}}& G_{\rho_0(C)^\perp} \ea\right]\\
&= &\left[\ba{cccccc|ccc}  c_1&\cdots &c_l&0&\cdots &0 & \multicolumn{3}{c}{\bf{0}} \\\hline 
s_1 & \multicolumn{4}{c}{\cdots} & s_n& \multicolumn{3}{c}{\bf{0}}\\ \multicolumn{6}{c|}{\bf{0}}&r_1&\cdots &r_n \ea\right],
 \ee
 where, without loss of generality, we can assume that $\rho_0(c')$  and therefore $\rho_0(c)$ have support in the first $l$ columns only, i.e., $c_i\neq0$ for $1\leq i\leq l$, and that  $c_i=0$ for $i>l$, where $1\leq l\leq n$. Note that $l\geq1$ because we must have
 $c\cdot c'=0$ and $l=0$ implies that $(1|0)\cdot (1|0)=0$ which is clearly not possible. 

 Let us transform the first column of $S$ as per equation~(\ref{eq:recoveryLC}) i.e., 
 $ s_1\mapsto \sum_{i\in \supp(\rho_0(c'))} c_i' s_i$.   Then we obtain  
 \be
\left[\ba{ccccccc|ccccccc}  1&c_2&\cdots &c_l&0&\cdots &0 & \multicolumn{7}{c}{\bf{0}} \\\hline 
\bf{0} &s_2 & \multicolumn{4}{c}{\cdots} & s_n& \multicolumn{7}{c}{\bf{0}}\\ 
\multicolumn{7}{c|}{\bf{0}}&r_1&\tilde{r}_2&\cdots &\tilde{r}_l&r_{l+1}& \cdots & r_n \ea\right].
 \ee
Therein, the columns $r_2$ to $r_l$ are transformed in the $Z$-part while only the first column is
 transformed in the $X$-part. For binary schemes this involves only CNOT gates, for nonbinary schemes, we have to
 use the generalized CNOT gates  \cite{grassl03}.  
 Now let us transform the encoded $X$ operator to the trivial operator given by $X_1$.
 This gives us 
 \be
 \left[\ba{ccccccc|ccccccc}  1&\multicolumn{6}{c|} {\bf{0}} & \multicolumn{7}{c}{\bf{0}} \\\hline 
\bf{0} &\tilde{s}_2 & \cdots & \tilde{s}_l & s_{l+1}&\cdots  & s_n& \multicolumn{7}{c}{\bf{0}}\\ 
\multicolumn{7}{c|}{\bf{0}}&\bf{0}&\tilde{r}_2&\cdots &\tilde{r}_l&r_{l+1}& \cdots & r_n \ea\right] 
\ee
which is in the form
\be
\left[\ba{cc|c}1&\bf{0} & \bf{0}\\\hline
\bf{0}&\tilde{S}_X& \bf{0} \\
\multicolumn{2}{c|}{\bf{0}}& \tilde{S}_Z\ea \right].
 \ee
 The column $\tilde{r}_{1}$ has to become zero because the stabilizer must commute with the encoded $X$ operator now
 given by $X_1$. The encoded secret has now been transformed as 
 \be
 \sum_{x\in \tilde{S}_X}\ket{s}\ket{x} = \ket{s}\sum_{x\in \tilde{S}_X}\ket{x} 
 \ee
 As can be seen above, the secret is completely disentangled from the rest of the qubits. Furthermore, in all these transformations
 we operated only on the qudits in the support of the minimal codeword. Thus the elements of 
 $\Gamma_{0,m}$ are authorized sets.
  
  \medskip
  \noindent
 Minimality of $\Gamma_{0,m}$: The sets in $\Gamma_{0,m}$
are minimal authorized sets because a proper subset of $\supp(c')$ is not in $C$ and all operators with
 their support in $\supp(c')$ must lie outside $C(S)$, the centralizer of $S$. 
 Such operators cannot extract any information about the encoded state
 as they correspond to detectable errors of the stabilizer code.

\smallskip
\noindent
Completeness of $\Gamma_0$:  
 By  Fact~\ref{fc:dualMat2ss},   the access structure $\Gamma_0$  is self-dual, therefore $|\Gamma_{0}|=2^{n-1}$. But the complement of everyone of these authorized sets is unauthorized set otherwise
 we would violate the no-cloning theorem. Together these sets exhaust all the possible subsets of $\{1,\ldots, n \}$.  Thus 
there are no more authorized sets outside the ones given in $\Gamma_0$ and 
the quantum access structure is completely defined by  $\Gamma_0$.

\smallskip
\noindent
Finally, the purity of $\Sigma$ follows from the explicit encoding procedure given. That $\Sigma$  is ideal follows from the fact that 
each share has the same dimension as the secret. 
 \end{proof}
We note that the choice of which element in the matroid is identified with the dealer is arbitrary. In
Theorem~\ref{th:isdMat2qss},  for simplicity we have assumed that the first element is the dealer. Further,
we need the representation of the matroid before we can use it to construct the scheme. 
It is still open if every identically self-dual matroid can be realized as a self-dual code, \cite{cramer08}. 
However, every self-dual code induces a 
identically self-dual matroid. Consequently, we have the following corollary that gives us many efficient pure state quantum secret
sharing schemes. 
\begin{corollary}
Let  $C \subseteq \F_q^{n}$  be an $[n+1,k,d]_q$ code such that $C^\perp= C$ with generator matrix $G_C$ given as
\ben
G_C = \left[\ba{c|c}1 & g\\\bf{0} & G_{\sigma_0(C)} \ea  \right] \mbox{ and } G_{\rho_0(C)}= \left[\ba{c}g\\G_{\sigma_0(C)} \ea  \right].
\een
Then there exists an ideal pure state quantum secret sharing scheme $\Sigma$ on $n$ parties whose minimal access structure 
is $$\Gamma_{0,m} = \left\{ \supp(c)\setminus \{ 0\} \bigg| \ba{l} c \mbox{ is a minimal codeword} 
\\\mbox{of  } C \mbox{ with  } c_0 =1\ea\right\}.$$
The encoding for $\Sigma$ is determined by the stabilizer code with the stabilizer matrix given by
\ben
S=\left[\ba{c|c} G_{\sigma_0(C)}& \bf{0}\\\bf{0}&G_{\rho_0(C)^\perp} \ea\right].
\een
The encoding  for a state $\ket{s}$, where $s\in \F_q$,   is given as
\ben
 \ket{s} \mapsto \sum_{x\in \sigma_0(C)} \ket{s\cdot g+ x}.
\een
The reconstruction procedure for an authorized set $A$ of $\Sigma$ is the transformation on $S$ such that the encoded operators for
the transformed stabilizer are $X_1$ and $Z_1$.
\end{corollary}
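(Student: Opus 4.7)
The plan is to reduce the corollary to Theorem~\ref{th:isdMat2qss} by producing, from the self-dual code $C$, a representable identically self-dual matroid to which the theorem applies. After that, the encoding map, stabilizer matrix, reconstruction procedure, and minimal access structure are obtained directly from the conclusion of the theorem with only a notational translation.

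First I would associate to $C$ the matroid $\mc{M}_C$ on ground set $V=\{0,1,\ldots,n\}$ whose circuits are the supports of minimal-support nonzero codewords of $C$; equivalently, this is the matroid represented by the columns of $G_C$. A standard result in matroid theory states that the matroid represented by a parity check matrix of a linear code is the dual of the matroid represented by its generator matrix, i.e.\ $\mc{M}_{C^\perp} = \mc{M}_C^\ast$. Since by hypothesis $C = C^\perp$, we obtain $\mc{M}_C = \mc{M}_C^\ast$, so $\mc{M}_C$ is identically self-dual and, by construction, representable over $\F_q$ by $G_C$. This puts us precisely in the setting of Theorem~\ref{th:isdMat2qss}.

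Next I would invoke Theorem~\ref{th:isdMat2qss} with $\mc{M}(V,\mc{C})=\mc{M}_C$ and with the given block decomposition of $G_C$, which is identical to the one assumed in the theorem. The theorem then immediately delivers an ideal pure state quantum secret sharing scheme $\Sigma$ on $P=\{1,\ldots,n\}$ with the stabilizer matrix $S$ and the encoding map $\ket{s}\mapsto \sum_{x\in\sigma_0(C)}\ket{s\cdot g+x}$ stated in the corollary, together with the reconstruction procedure described there. No reworking of the encoding proof is needed: every block in the theorem's statement has the same meaning here.

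Finally, I would rewrite the minimal access structure from its matroidal form \eqref{eq:mat2ssMin}, $\Gamma_{0,m}=\{A\mid A\cup\{0\}\in\mc{C}\}$, in terms of minimal codewords of $C$. Since the circuits of $\mc{M}_C$ are exactly the supports of minimal-support codewords, and since a circuit contains $0$ iff some such codeword has a nonzero zeroth coordinate, which by linearity can be normalized to $1$, the two descriptions of $\Gamma_{0,m}$ coincide. The main---and essentially only---nontrivial ingredient in this plan is the identity $\mc{M}_{C^\perp}=\mc{M}_C^\ast$; the remainder is a direct application of Theorem~\ref{th:isdMat2qss} together with an unwinding of the definitions of minimal codewords and circuits.
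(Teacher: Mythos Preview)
Your proposal is correct and follows essentially the same route as the paper: the paper observes that every self-dual code induces an identically self-dual matroid and then derives the corollary directly from Theorem~\ref{th:isdMat2qss}. Your explicit justification that $\mc{M}_{C^\perp}=\mc{M}_C^\ast$ and your translation of $\Gamma_{0,m}$ from circuits to minimal codewords merely spell out steps the paper leaves implicit.
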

\noindent
\subsection{An example}
We now give an example to illustrate the construction. 
Let us consider the extended Hamming code given by the following generator matrix. 
\be
G_C=\left[\ba{c|ccccccc}1&1&1&1&1&1&1&1\\\hline 0&1&0&0&0&1&1&1\\0&0&1&0&1&0&1&1\\0&0&0&1&1&1&1&0
\ea\right]
\ee
We can check that $C$ is self-dual. The punctured code $\rho_0(C)$ and the shortened code $\sigma_0(C)$  are given by the 
following generator matrices. (For clarity we show only nonzero entries).
\be
G_{\rho_0(C)}&=&\left[\ba{ccccccc}1&1&1&1&1&1&1\\\hline 1&0&0&0&1&1&1\\0&1&0&1&0&1&1\\0&0&1&1&1&1&0 \ea\right]
\\ G_{\sigma_0(C)}&=&\left[\ba{ccccccc} 1&0&0&0&1&1&1\\0&1&0&1&0&1&1\\0&0&1&1&1&1&0 \ea\right]
\ee
Now let us form a (CSS) stabilizer code with stabilizer matrix as follows. 
\be
S &=& \left[ \ba{c|c}G_{\sigma_0(C)}&\bf{0} \\ \hline \bf{0} & \rho_{0(C)^\perp}\ea \right] \\
&=& 
\bmt{ccccccc | ccccccc}  1&0&0&0&1&1&1&\multicolumn{7}{c}{\multirow{3}{*}{\bf{0}}}\\0&1&0&1&0&1&1\\0&0&1&1&1&1&0& \\\hline
\multicolumn{7}{c|}{\multirow{3}{*}{\bf{0}}}&1&0&0&0&1&1&1\\ \multicolumn{7}{c|}{}&0&1&0&1&0&1&1\\ \multicolumn{7}{c|}{}&0&0&1&1&1&1& 0
 \emt
\ee
For this stabilizer code the encoding for $\ket{0}$ and $\ket{1}$ is given as follows:
\be
\ket{0}&\mapsto&\ket{0000000}+\ket{1000111}+\ket{0101011}+\ket{0011110}\\
&+&\ket{1101100}+\ket{1011001}+\ket{0110101}+\ket{1110010}\\ 
\ket{1}&\mapsto&\ket{1111111}+\ket{0111000}+\ket{1010100}+\ket{1100001}\\
&+&\ket{0010011}+\ket{0100110}+\ket{1001010}+\ket{0001101}
\ee
Observe that $\ket{s}\mapsto \sum_{c\in \sigma_0(C)} \ket{s\cdot \overline{X}+c}$, where $\overline{X}$ is in $\sigma_0(C)^\perp\setminus \sigma_0(C)$.
Now consider a minimal codeword $c$ in $C^\perp$ such that $c_0=1$. One such codeword is $c=(1,1,1,0,0,0,0,1)$. Observe that 
$(1,1,0,0,0,0,1|\bf{0})$ i.e. $X_1X_2X_7$ is an encoded operator for the stabilizer code.
The support of $c$ is given by $\{ 0,1,2,7\}$.  The linear combination of the columns in the support of $\rho_0(c)$ gives us 
$(1,0,0,0)^t$. If we computed the linear combination of the columns $1,2,7$ into the first column we get the following action on 
stabilizer of the code. 
\be
S  \mapsto
\bmt{ccccccc | ccccccc}  0&0&0&0&1&1&1&\multicolumn{7}{c}{\multirow{3}{*}{\bf{0}}}\\0&1&0&1&0&1&1\\0&0&1&1&1&1&0& \\\hline
\multicolumn{7}{c|}{\multirow{3}{*}{\bf{0}}}&1&1&0&0&1&1&0 \\ \multicolumn{7}{c|}{}&0&1&0&1&0&1&1\\ \multicolumn{7}{c|}{}&0&0&1&1&1&1& 0
 \emt
\ee
The encoded operator $(1,1,0,0,0,0,1| \bf{0} )$ maps to $  (1,1,0,0,0,0,1|\bf{0}) $. If we now transform the encoded operator to 
$(1,0,0,0,0,0,0|\bf{0})$ the stabilizer gets further transformed as 
\be
S  \mapsto
\bmt{ccccccc | ccccccc}  0&0&0&0&1&1&1&\multicolumn{7}{c}{\multirow{3}{*}{\bf{0}}}\\0&1&0&1&0&1&1\\0&0&1&1&1&1&0& \\\hline
\multicolumn{7}{c|}{\multirow{3}{*}{\bf{0}}}&0&1&0&0&1&1&0 \\ \multicolumn{7}{c|}{}&0&1&0&1&0&1&1\\ \multicolumn{7}{c|}{}&0&0&1&1&1&1& 0
 \emt
\ee
Observe that this time only the $Z$ part of the stabilizer is transformed. Now the encoded secret is in the state  
\be
&&\ket{s}\left( \ket{000000} +\ket{000111} + \ket{101011} + \ket{011110}\right.\\
&&+\left.\ket{101100}+\ket{011001}+\ket{110101}+ \ket{110010} \right)
\ee
The secret is completely disentangled from the rest of the shares. Therefore, $\supp(c)\setminus \{0\}$ forms an authorized set.
The rest of the shares cannot reconstruct or extract any information from their shares because of the no-cloning theorem.
Similarly any minimal codeword in $C^\perp$ with $c_0=1$ defines an authorized set for the scheme. Suppose that $c$ is a minimal
codeword with $c_0=0$, then it must be in $\sigma_0(C)$ and any other vector whose support is the same must be in 
$S$ or outside  $C(S)$, the centralizer of  $S$. No such operator can reveal any information about the encoded secret since they are
detectable errors of the stabilizer code and by definition detectable errors reveal nothing about the encoded information.

\subsection{Discussion}

The results in this section have important benefits. Quantum secret sharing schemes for general access structures were proposed
by Gottesman \cite{gottesman00} and Smith \cite{smith00}, based on monotone span programs. These constructions are not optimal in general.
Our method gives optimal schemes with information rate one. However, not every ideal quantum secret sharing scheme
can be derived by Theorem~\ref{th:isdMat2qss}.  For instance, 
the $((3,5))$ threshold scheme  can be realized using the $[[5,1,3]]$ quantum code, but it cannot be realized by the method
proposed.  Furthermore, the access structure of the $((3,5))$ scheme induces a matroid. 
It would be worth investigating to find out how such quantum schemes can derived from matroids. Another interesting question
would be to derive ideal quantum secret sharing schemes from non-representable ISD matroids. 
%Furthermore, our construction has one more benefit, it gives us a means to translate self-dual codes into ideal pure state secret sharing schemes. 

\subsection*{Acknowledgment}
We thank Hoi-Kwong Lo for pointing out an inaccuracy in an earlier version of the paper. 
Part of this work was presented at the Workshop on Applications of Matroid Theory and Combinatorial Optimization to Information and Coding Theory,  Banff International Research Station, Banff, 2009.
This research was supported by NSERC, CIFAR and MITACS.

%\bibliographystyle{plain}
%\bibliography{qssRefs} 

\end{document}